\newtheorem{theorem}{Theorem}
\newtheorem{lemma}{Lemma}
\DeclareMathOperator*{\Arg}{arg}
\begin{document}
%
\title{Sensitive White Space Detection with Spectral Covariance Sensing}

\author{{\large Jaeweon~Kim and Jeffrey~G.~Andrews}%
\thanks{The authors are with the 
Department of Electrical and Computer Engineering,
University of Texas at Austin,
Austin, TX 78712, USA
(e-mail: jaeweon@mail.utexas.edu, jandrews@ece.utexas.edu).
This paper was presented in part at IEEE ICASSP 2010, Dallas, TX.  Revision date: \today
}}

\maketitle

\begin{abstract}
This paper proposes a novel, highly effective spectrum sensing algorithm for cognitive radio and whitespace applications.
The proposed spectral covariance sensing (SCS) algorithm exploits the different statistical correlations of the received signal and noise
in the frequency domain. Test statistics are computed from the covariance matrix of a partial spectrogram and compared with a decision threshold
to determine whether a primary signal or arbitrary type is present or not.  This detector is analyzed theoretically and
verified through realistic open-source simulations using actual digital television signals captured in the US. Compared to the state of the art in the literature, SCS improves sensitivity by 3 dB for the same dwell time, which is a very significant improvement for this application.
Further, it is shown that SCS is highly robust to noise uncertainty, whereas many other spectrum sensors are not.

\end{abstract}
\begin{keywords}
cognitive radio, spectrum sensing, spectral covariance, IEEE 802.22, white space
\end{keywords}
%

\section{Introduction}
\label{sec:intro} 

The scarcity of prime (frequencies below 2 or 3 GHz) spectrum is a decades-old problem in wireless communication, and will be for the indefinite future.  Recognizing that a static allocation of spectrum over time and space is highly suboptimal -- for example, often less than 10\% efficient \cite{CabMis04, CorGho07, BroWol04} -- there has been a flurry of interest in finding ways to adaptively allocate spectrum.  Cognitive radio (CR) is a promising approach, whereby transmitter-receiver pairs find unused spectrum (white space) and use it for communication
 \cite{Hoven:thesis,GeiTon07,ZenLia09}.  The idea of spectrum reuse has received regulatory support in the form of the FCC white space ruling authorizing cautious reuse of underutilized spectrum in the licensed TV bands \cite{FCC08-260}.  The IEEE 802.22 standard \cite{802.22/D1.0_2008, SteCho09} is attempting to formalize a solution that will meet FCC approval and allow communication in these bands in the near future.
 
Reliable spectrum sensing -- whereby devices determine whether another (e.g. licensed) signal is present -- is fundamental to the success of cognitive radio.  Primary (licensed) users have priority to use the channel and the secondary users can only use the resource when it is not occupied.  Determining when the channel is occupied is quite difficult, because a potential transmitter must guarantee with high probability that no one in its transmission range could possibly be listening to a primary/licensed signal.  This puts quite strict requirements on the detection capability\footnote{The terms detection and sensing are used
interchangeably throughout this paper.}, and the secondary (i.e. cognitive) system is required to robustly sense the presence of primary signals at very low SNRs. For example, an IEEE 802.22 system must be capable of detecting digital TV signals at an SNR of -21 dB: when the noise is over one hundred times stronger than the actual signal.

\subsection{Spectrum Sensing: Related Work}

Accordingly, spectrum sensing research has been active, resulting in numerous proposed sensing algorithms, which are 
well summarized in \cite{Shellhammer08} and \cite{YucArs09}. Studies on spectrum sensing can be
classified in two categories: blind detection and feature detection. Blind detection is universal since it does not rely on prior information such as signal characteristics, channel and noise power, but the performance is generally poor. Feature detection senses specific characteristics of a {\em known} signal, and typically shows better performance than blind detection at the expense of not being applicable to all possible primary signals.  The simplest (blind) sensing algorithm is an energy detector but it suffers from severe degradation under uncertain noise power \cite{Urkowitz67, SonFis92, TanSah08}. A more robust blind sensing method called covariance absolute value (CAV) detection \cite{ZenLia09} exploits the uncorrelated nature of the noise, whereas the primary signal is correlated. %
While the detection performance is comparable to the energy detector for a given noise power, it is much more robust to uncertain noise power than the energy detector.

There have been several studies on the detection of digital TV signal features, which is one of the primary services to be protected in the IEEE 802.22 systems, and hence a logical starting point for feature detection sensing algorithms. A Field Sync Correlation Detector (FSCD) and a Segment Sync Autocorrelation Detector (SSAD) are proposed in \cite{CheGao07}, which exploit the repetitive nature of synchronization sequences  but their performance is similar to blind detectors. Cordeiro {\it et al.} proposed two FFT based pilot detection methods, one sensing the pilot energy and the other sensing the pilot location \cite{CorGho07}. Pilot location detection is robust to the noise uncertainty while pilot energy detection is not. Under a certain noise power, both the methods achieve fairly good performance, the best in the literature to date as far as the present authors can judge.

The idea of signal detection using cyclic spectral density (CSD) or spectral correlation function (SCF) is introduced in \cite{HanSho06}. The cyclostationary nature of the modulated signal generates unique cyclic frequencies that are known to the detector {\em a priori}. FFT based pilot location detector \cite{CorGho07} is similar to the CSD detector in the sense that they both detect frequencies that contain relatively strong power. 

Tandra and Sahai developed the concept of a useful theoretical limit called the $\mathrm{SNR_{wall}}$ on reliable sensing under uncertain noise power \cite{TanSah07, Tandra:thesis}. They also proposed a matched filter with run-time noise calibration \cite{TanSah08} to combat the time-varying nature of the wireless channel. Although the matched filter is an optimal detection algorithm, it requires perfect synchronization and parameter estimation, which is generally impractical.

\subsection{Contributions and Organization}

In this paper, we propose a novel spectral covariance sensing (SCS) algorithm that exploits different statistical correlations of the signal and noise in the frequency domain.  
The SCS detects spectral features for maximum sensitivity, but also is applicable for non-flat spectrum signals. We analyze its performance theoretically and then verify those results through extensive simulations using actual DTV signals captured in a real environment.  Our simulations are publicly available, so that our performance claims can be externally verified \cite{SCS-CODE-URL}, which seems particularly important for this application \cite{VanVet09}.

Rigorous comparisons with the FFT based pilot energy detector \cite{CorGho07} and the CAV detector \cite{ZenLia09} show that SCS achieves 3 dB better sensitivity than the FFT based pilot detection method (and an even larger gain vs. CAV) with the same sensing time or equivalently, achieves the same sensitivity in 20\% of the sensing time. In the extremely low SNR regime, a 3 dB gain is a very significant improvement.  It is also shown that the sensitivity of the CAV can be improved by a lot when combined with parts of the proposed algorithm.  Furthermore, we show that the technique is unusually robust to noise uncertainty. The SCS achieves the best sensitivity of any spectrum sensing approach thus far proposed in the literature, to the best of our knowledge.

The rest of this paper is organized as follows. The SCS algorithm is presented in Section \ref{sec:SensingAlgorithm}. Section \ref{sec:SCSAnalysis} analyzes detection performance of the SCS, especially in the low SNR regime. The theoretical results are verified through simulations using actual data and requirements of the IEEE 802.22 standards in Section \ref{sec:SimResults}. Detection performance with various parameter selection is explored and comparisons with the previous methods are also presented. 
Section \ref{sec:conclusion} concludes the paper.

\section{Spectral Covariance Sensing}
\label{sec:SensingAlgorithm}

Assume that there is either one or no primary transmitter to detect, and
that the secondary node can be located inside or outside of the primary cell boundary.
The detection problem can then be formulated as a binary decision under two hypotheses as in \cite{TanSah08,ZenLia09}:
\begin{eqnarray}
    \mathcal{H}_0: z(n) &=& w(n), \\
    \mathcal{H}_1: z(n) &=& s(n) + w(n),
\end{eqnarray}
where $z(n)$ is the equivalent received signal at baseband, $s(n)$ is the signal component of received samples and $w(n)$ is the noise component. The signal component $s(n)$ may have a DC term ($s_d(n) = s_d$) and an AC term ($s_a(n)$). Therefore the signal samples are assumed to be independent and identically-distributed (i.i.d) with mean $\mathbb{E}[s(n)] = s_d$ and variance $\sigma_s^2 = \mathbb{E}[s_a(n)^2]$.
The noise samples $w(n)$ are zero-mean i.i.d white Gaussian noise with variance $\sigma_w^2$.

For simplicity, we assume that the primary signal has a flat spectrum with a pilot tone at low frequency, consistent with the ATSC signal \cite{CheGao07, CorGho07}. However, the proposed algorithm can also be applied to signals with {\em non-flat} spectrum.
The SCS algorithm exploits correlation of spectral feature in the band of interest 
where the signal specific spectrum that distinguishes itself from other signal or noise is located. 
Once the spectrum of the received signal is obtained by periodogram estimation \cite{ProakisBook96},
its correlation is computed using the sample covariance matrix.
The primary signal typically has unique non-flat spectrum, so it is highly correlated, whereas the noise spectrum is flat and is almost entirely uncorrelated.
To find the relative correlation, the autocovariance of the spectrum is compared with the total covariance.
The signal is detected if the spectral correlation is higher than the predefined threshold.
The proposed SCS algorithm is described as the following steps.

    \begin{enumerate}
      \item Downconvert the received signal $x(t)$ to a baseband complex signal $y(t) = x(t)e^{-j2\pi f_c t}$
          to place the pilot tone at or near DC, where $f_c$ is the pilot frequency.
      \item Low pass filter and downsample $y(t)$ by appropriate sampling rate $F_s$ 
      to form
      $z(n)$. 
      \item Compute $z(n)$'s spectrogram by calculating the squared magnitude of its short-time Fourier transform (STFT) as
      \begin{equation}
        Z_\tau(k) = \frac{1}{N}\left|\sum_{n=0}^{N-1} z(n + \tau N)e^{-j\frac{2\pi nk}{N}}\right|^2,
      \end{equation}
         where $N$ is the number of FFT points,
	 $\tau \in \left\{0, 1, \cdots, N_d-1\right\}$ is the index of the sensing window, 
	 $N_d$ is the total number of sensing windows
	 and $k \in \{-N/2, \cdots, 0, \cdots, N/2-1\}$ is the frequency index.
         Hence, the FFT is calculated at every dwell time ($t_s$) for $N_d$ times. 
      \item Select components near DC according to 
          \begin{eqnarray}
              \mathbf{M} &=& \left[\begin{array}{cccc}
                  Z_0(-K) & Z_1(-K) & \cdots & Z_{N_d-1}(-K)\\
                  Z_0(-K+1) & Z_1(-K+1) & \cdots & Z_{N_d-1}(-K+1)\\
                  \vdots & \vdots & \vdots & \vdots\\
                  Z_0(0) & Z_1(0) & \cdots & Z_{N_d-1}(0)\\
                  \vdots & \vdots & \ddots & \vdots\\
                  Z_0(K) & Z_1(K) & \cdots & Z_{N_d-1}(K)
                  \end{array} \right]\\
                 &\triangleq& \big[\mathbf{m_0} \qquad \mathbf{m_1} \qquad  \cdots \qquad  \mathbf{m_{N_d-1}} \big]
          \end{eqnarray}
          where $K$ is the index of low pass filter cut off
          frequency ($B_f$) in the FFT, i.e. $K = \lfloor N \cdot B_f/F_s \rfloor$.
          This matrix reduction is in effect a low pass filter, that selects spectral feature of the primary signal and reduces noise power.
	  The effective SNR is increased accordingly as long as the selected signal portion contains more power than the rest does in the same bandwidth.
	  Algorithmic complexity is also reduced by a factor of $N/(2K+1)$ due to the smaller matrix size in the covariance calculation.
      \item Calculate sample covariance of $\mathbf{M}$ as
        \begin{eqnarray}
            \mathbf{C} &=& \mathrm{cov}(\mathbf{M}) \nonumber \\
	    &=& \mathbb{E}_k\left[(\mathbf{M} - \boldsymbol{1_{N_d}} \, \overline{\mathbf{M}})^T (\mathbf{M} - \boldsymbol{1_{N_d}} \, \overline{\mathbf{M}}) \right]
        \end{eqnarray}
        where 
          $\overline{\mathbf{M}} = \boldsymbol{\mu_M} = [\mu_0, \mu_1, \cdots, \mu_{N_d-1}]$, $\mu_\tau = \overline{\mathbf{m_\tau}} = \frac{1}{2K+1} \sum_k m_\tau(k)$, 
	  $\boldsymbol{1_{N_d}} = [1, 1, \cdots, 1]^T$ is the all one's vector with length $N_d$ and $\mathbb{E}_k[\cdot] = \frac{1}{2K} \sum_k [\cdot]$.
	  Note that the covariance matrix $\mathbf{C}$ is symmetric.
      \item Compute test statistic $T = T_1/T_2$, where
          \begin{eqnarray}
              T_1 &=& \frac{1}{N_d} \sum_{\tau=0}^{N_d-1} \sum_{u=0}^{N_d-1} c_{\tau u} = T_2 + \frac{2}{N_d} \sum_{\tau=0}^{N_d-1} \sum_{u=\tau+1}^{N_d-1} c_{\tau u}\\
              T_2 &=& \frac{1}{N_d} \sum_{\tau=0}^{N_d-1} c_{\tau \tau},
          \end{eqnarray}
          where $c_{\tau u}$ is the element of $\mathbf{C}$ at the $\tau$-th row and $u$-th column, which is the covariance of $\mathbf{m_\tau}$ and $\mathbf{m_u}$. In other words, $T_2$ is the sample mean of the diagonal terms of the $\mathbf{C}$ matrix, which are the autocovariances of the spectrograms, and $T_1$ is the sum of all the elements of the covariance matrix.
      \item Compare $T$ with decision threshold $\gamma$ as
      \begin{equation}
	T  \overset{\mathcal{H}_1}{\underset{\mathcal{H}_0}{\gtrless}} \gamma, \,\,
	\mathrm{s.t.} \,\,\, \gamma  = \Arg_\mathcal{T} \sup P_{FA}(\mathcal{T}) = P_{FA,req}, \label{eq:gamma_def}
      \end{equation}
      where $P_{FA}(\mathcal{T})$ is the probability of false alarm with threshold $\mathcal{T}$
      and $P_{FA,req}$ is the required false alarm probability given by the specification. If $T$
      exceeds $\gamma$, detection is declared $\left(\mathcal{H}_1 \right)$. Otherwise, the decision is made that no signal is present $\left(\mathcal{H}_0 \right)$.
    \end{enumerate}

The test statistics $T$ may superficially look similar to \cite{ZenLia09}, but there are several key differences that lead to SCS's better performance. 
First, \cite{ZenLia09} used correlation in the time domain while SCS uses the frequency domain.
As shown in \cite{CorGho07} and  \cite{HanSho06}, power spectrum is one of the most significant features of the primary signal to be detected and is a frequency domain measure. Note that SCS detects unique spectral features that are not entirely flat, e.g. peaks or valleys, while \cite{ZenLia09} can be applied to any modulated signal without prior information.
Second, SCS uses only part of the spectrum by low pass filtering in step 4), which results in reduced noise power, in other words, increased effective SNR. Downsampling in step 2) also helps by averaging out noise which has zero mean. Low pass filtering in step 2) is essential to prevent aliasing and suppresses noise power in stop band. Engaging this step can improve the detection performance of \cite{ZenLia09} considerably.
Third, \cite{ZenLia09} uses non-negative (i.e. absolute) values of the covariance matrix but SCS uses both positive and negative values. 

\section{Performance Analysis}
\label{sec:SCSAnalysis}

The performance of spectrum sensing can be measured by two probabilities: probability of detection $P_D$ and probability of false alarm $P_{FA}$, which are defined as
\begin{eqnarray}
    P_D &=& \Pr \left(T > \gamma \bigm| \mathcal{H}_1 \right), \label{eq:PD_def}\\
    P_{FA} &=& \Pr \left(T > \gamma \bigm| \mathcal{H}_0 \right).  \label{eq:PFA_def}
\end{eqnarray}
The primary goal of the proposed sensing algorithm is achieving high $P_D$ and low $P_{FA}$ at the lowest received signal power level,
however these two probabilities trade off each other depending on the decision threshold $\gamma$.
Therefore the required detection capability is determined by the application as a minimum $P_D$ and maximally allowed $P_{FA}$ pair at the required SNR.

In this section, detection performance of the SCS algorithm is analyzed according to the following steps.
Note that the current signal model is one of the most difficult cases for the SCS detector, since it has only one peak over a wide flat spectrum. The analysis given in this section can be easily applied to other signal models with other spectral features, by suitably adjusting the value(s) of $f_c,\, K, \,B_f, \,t_s,$ and $N_d$.
We will first set up the analytical model and derive the statistics of variables.
Second, we will find a solution for $P_{FA}$. Since the noise is assumed to be white Gaussian and the $P_{FA}$ analysis needs only the noise spectrum and its correlation, we can get the analytical results accurately.
Then, the decision threshold $\gamma$ is obtained by setting the $P_{FA}$ as required.
Finally, the $P_D$ is analysed with the given threshold and received SNR.
In order to find the theoretical solution for $P_D$, the statistical distribution of $T = T_1/T_2$ is needed, however it is almost impossible to get in practice, especially when the primary signal can have arbitrary spectrum shape as allowed in the SCS algorithm. Therefore, we assume Gaussian distribution thanks to the central limit theorem as $P_{FA}$ case.
Besides the Gaussian assumption, we simplifies some of the statistics for very low SNR regime that enables to find a closed form solution.
A summary of the notations used in the analysis is given in Table \ref{tbl:Notations}.
The convention used in this paper is that the subscript $p$, $s$ and $w$ stands for pilot, the rest of the signal and noise respectively.


\subsection{Analytical Model and Statistics}
\label{subsec:stats}

The partial matrix $\mathbf{M}$ captures spectrum on the band of interest, 
such as the spectrum near the pilot for the ATSC signal.
Ideally, white noise would have a flat spectrum over the frequency band; however this cannot be true in a practical system using periodogram estimation.
Hence the noise spectrum can be estimated by some fluctuations over the nominal noise power spectrum density (PSD) $N_w/2$. 
The residual components, i.e. the fluctuations over the nominal spectrum, are weakly correlated due to the filtering, however the amount is small compared to the signal correlation.
Since the diagonal elements of the covariance matrix $\mathbf{C}$ are the autocovariances, they are larger than the off-diagonal ones.
When there is no signal components ($\mathcal{H}_0$), $\mathbf{C}$ becomes almost a diagonal matrix and the resulting test statistic $T$ gets close to 1.
When there is a primary signal whose spectrum is not flat, the residual spectrum is highly correlated and $T$ increases.

When there is no signal, the spectrum is estimated as
\begin{eqnarray}
    m_\tau(k) &=& \frac{1}{N}\left|\sum_{n=0}^{N-1} w_\tau(n)e^{-j\frac{2\pi kn}{N}}\right|^2 \\
              &=& N_w(\tau) + \delta_w(\tau, k),
\end{eqnarray}
and when the signal is present, it is
\begin{eqnarray}
    m_\tau(k) &=& \frac{1}{N}\left|\sum_{n=0}^{N-1} (s_\tau(n) + w_\tau(n) )e^{-j\frac{2\pi kn}{N}}\right|^2 \nonumber \\
              &=& \frac{1}{N}\sum_n \sum_m \left\{s_\tau(n)s_\tau^\ast(m) e^{-j\frac{2\pi k(n-m)}{N}} +
                      w_\tau(n)w_\tau^\ast(m) e^{-j\frac{2\pi k(n-m)}{N}} \right. \nonumber \\
              & &  \;\;\;\; \left. + 2 \mathrm{Re}\left\{s_\tau(n)w_\tau^\ast(m) e^{-j\frac{2\pi k(n-m)}{N}} \right\} \right\} \nonumber \\
	      &=&  N_s(\tau) + \delta_s(\tau, k) + N_w(\tau) + \delta_w(\tau, k) + \frac{2}{N}\mathrm{Re}\left\{S_\tau(k) W_\tau^\ast(k)\right\}. \label{eq:m_tau_signal}
\end{eqnarray}
The signal and noise components are assumed to be uncorrelated.

If there is no signal, the statistics of $m_\tau(k)$ are given as in \cite{ProakisBook96}
\begin{eqnarray}
  \mathbb{E} \left[m_\tau(k)\right] &=& N_o/2 = N_w/2 \triangleq \mu_\tau, \\
    \mathrm{var}\left(m_\tau(k)\right) &=& \left(N_w/2\right)^2, 
\end{eqnarray}
where $N_o$ is the one-sided PSD of the noise signal $w(t)$, which has full span of the bandwidth $B$. Note that $\mu_\tau \neq \mathbb{E}[\mathbf{m}_\tau(k)]$ for non-white signal. Since $w(n)$ is the filtered and downsampled version of $w(t)$, the effective noise power is given as
\begin{equation}
    \sigma_w^2 = \sigma_o^2 B_f/B = N_w B_f,
\end{equation}
where $\sigma_o^2$ is the noise power in received signal $x(t) = w(t)$. A perfect low pass filter with cutoff frequency $B_f$ is assumed here.

In this paper, we assume
%
the primary signal has a pilot tone, 
which has $f_p (0 < f_p < 1)$ of the total signal power $P_S$.
With these settings, we can obtain the mean PSD of the received signal as follows.

\begin{lemma}
  \label{lem:mu_tau}
The in-band average spectrum of the signal and noise at the $\tau$-th sensing window, which is the mean of the $\tau$-th column of the matrix $\mathbf{M}$, is
  \begin{equation}
    \mu_\tau  = N_w + \frac{\delta_p(\tau)}{2K} + \frac{N_s(\tau)}{2}.
  \end{equation}
\end{lemma}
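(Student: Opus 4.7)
The plan is to start from the per-window periodogram decomposition in the signal-present expression for $m_\tau(k)$ given just above the lemma, substitute it directly into the definition $\mu_\tau = \frac{1}{2K+1}\sum_{k=-K}^{K} m_\tau(k)$, and evaluate the sum term by term. The two bin-independent pieces $N_s(\tau)$ and $N_w(\tau)$ pull out of the average and become the $N_s(\tau)/2$ and $N_w$ contributions in the claim once the one-sided versus two-sided PSD conventions for the downconverted complex baseband signal are reconciled with the relation $\sigma_w^2 = N_w B_f$ already stated in the paper.

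Next I would split the AC-signal fluctuation $\delta_s(\tau, k)$ into two pieces: a sharp pilot spike $\delta_p(\tau)$ localized at the DC bin $k=0$, and data-driven fluctuations $\delta_d(\tau, k)$ over the remaining bins. Under the flat-spectrum-with-pilot primary signal model, $\delta_d(\tau, k)$ is zero-mean in $k$. Likewise, the noise fluctuations $\delta_w(\tau, k)$ are zero-mean about the nominal noise PSD, and the cross term $(2/N)\mathrm{Re}\{S_\tau(k) W_\tau^\ast(k)\}$ is zero-mean by the stated independence of $s$ and $w$ and the zero-mean property of $w$. Averaging these three zero-mean contributions over the $2K+1$ in-band bins gives approximately zero, while the isolated pilot spike contributes $\delta_p(\tau)/(2K+1) \approx \delta_p(\tau)/(2K)$, matching the middle term of the lemma.

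The main obstacle is accounting for the factors of two that come out of the PSD conventions. The $\mathcal{H}_0$ statistics quoted earlier assert $\mathbb{E}[m_\tau(k)] = N_w/2$ per bin, whereas the lemma displays $N_w$ as the averaged noise contribution; reconciling this means interpreting the nominal $N_w(\tau)$ in the complex baseband decomposition consistently with $\sigma_w^2 = N_w B_f$ and with the in-band fraction occupied by the AC signal, which is also what produces the $1/2$ in front of $N_s(\tau)$. A secondary point is pilot localization: the derivation implicitly uses the fact that step~1 (downconversion by $f_c$) places the pilot exactly on the DC bin, so its spectral energy lives entirely at $k=0$ without leakage, and the approximation $1/(2K+1)\approx 1/(2K)$ is valid because $K$ is moderate in the parameter regime of interest.
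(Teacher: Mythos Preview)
Your plan misidentifies the origin of the factor $1/2$ on $N_s(\tau)$, and this is the crux of the lemma. In the paper's model the pilot sits at the \emph{lower edge} of the primary signal's spectrum, so after downconversion by $f_c$ (step~1) the pilot lands at DC and the entire data-bearing part of the signal lies on the \emph{positive} side of DC. The paper's proof therefore partitions the in-band bins into three regions: $K_0=\{-K,\dots,-1\}$ (noise only), $K_1=\{0\}$ (pilot), and $K_2=\{1,\dots,K\}$ (signal plus noise). The signal floor $N_s(\tau)$ is present on the $K$ bins of $K_2$ and \emph{absent} on the $K$ bins of $K_0$, so when you average over all $2K+1$ bins you get $K\,N_s(\tau)/(2K+1)\approx N_s(\tau)/2$. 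The $1/2$ is thus a band-occupancy fraction, not a one-sided versus two-sided PSD convention.

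Your statement that ``the two bin-independent pieces $N_s(\tau)$ and $N_w(\tau)$ pull out of the average'' is precisely the step that fails: $N_s(\tau)$ is not bin-independent in this model, and if it were, its average would be $N_s(\tau)$, not $N_s(\tau)/2$. The attempt to recover the missing factor from the relation $\sigma_w^2=N_w B_f$ does not work; that relation fixes the noise level and has no bearing on how much of the band the signal occupies. Once you adopt the three-region decomposition, the rest of your outline (zero-mean $\delta_w$, zero-mean $\delta_s$ over $K_2$, pilot contributing $\delta_p(\tau)/(2K+1)\approx\delta_p(\tau)/(2K)$, vanishing cross term) goes through exactly as in the paper.
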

\begin{proof}
  See Appendix \ref{apdx:proof_mu_tau}.
\end{proof}

In the low SNR regime where $N_w \gg N_s$, $\mu_\tau$ can be further simplified as
\begin{equation}
    \mu_\tau  \approx  N_w + \frac{\delta_p(\tau)}{2K}, \label{eq:Nw+Ns}
\end{equation}
i.e. the in-band power is mostly concentrated on the pilot frequency and it can be comparable to the noise power even in the extremely low SNR where the rest of the signal power is almost dominated by the noise components.

The elements of the covariance matrix $\mathbf{C}$ can be obtained as
\begin{eqnarray}
  c_{\tau u} &=& c_{u \tau} = \mathbb{E}_k\left[(\mathbf{m_\tau} - \mu_\tau\boldsymbol{1_{2K+1}})^T(\mathbf{m_u} -
  \mu_u\boldsymbol{1_{2K+1}})\right] \\
        &=& \frac{1}{2K} \sum_{k=-K}^K \left(m_\tau\left(k\right) - \mu_\tau\right) \left(m_u\left(k\right) - \mu_u \right) \\
	&=& \mathbb{E}_k[\mathbf{m_\tau}^T \mathbf{m_u}] - \mu_\tau \mu_u \left(\frac{2K+1}{2K}\right). \label{eq:E[c_tau_u]}
\end{eqnarray}

Define normalized correlations for signal power spectrum and noise as
\begin{eqnarray}
  \alpha_p^l &=& \frac{\mathbb{E}[\delta_p(\tau) \delta_p(u)]}{\delta_p^2} > 
  	\frac{\mathbb{E}[N_s(\tau) N_s(u)]}{N_s^2} = \alpha_s^l, \\
  \alpha_w^l &=& \frac{\mathbb{E}_k[(\mathbf{d}_w^\tau)^T \mathbf{d}_w^u]}{\sqrt{\mathrm{var}_k(\mathbf{d}_w^\tau) \mathrm{var}_k(\mathbf{d}_w^u)}}
  = \frac{\frac{1}{2K} \sum_k \delta_w(\tau, k) \delta_w(u, k)}{N_w^2},
\end{eqnarray}
where $l = \tau - u$ and $\mathbf{d}_w^\tau = [\delta_w(\tau,-K), \cdots, \delta_w(\tau,K)]^T$.

Then the statistics of $c_{\tau u}$ can be obtained using the following Lemma \ref{lem:Ectu}, which specifies the covariance matrices. 

\begin{lemma}
  \label{lem:Ectu}
  When the primary signal is present, the statistics of $c_{\tau u}$ can be derived as
  \begin{eqnarray}
    \mathbb{E}[c_{ \tau u }] 
      &=& \alpha_w^l N_w^2 + \frac{\alpha_p^l}{2K}\delta_p^2 + N_s \left(N_w + (N_s + N_w)\alpha_s^l \right) + O(\frac{1}{K^2}), \label{eq:Ectu_Sig} \\
    \mathrm{var}[c_{ \tau u }] &=&  \frac{N_w^2\delta_p^2}{2K^2} \left(1 + \frac{\delta_p}{N_w}\right)^2 \left(1 + \alpha_p^l \delta(l) \right) +
    \frac{N_w^2}{4K^2} \left( K + \left(1+\frac{\delta_p}{N_w}\right)^2 \right) \left( 2 + N_w^2\delta(l) \right) \nonumber \\
     & & \;\;\; + O(\frac{1}{K^4}),  \label{eq:Vctu_sig}
  \end{eqnarray}
  where $l = \tau - u$ and
	$\delta(n)$ is a Kronecker delta function, which is $\delta(n) = \left\{\begin{smallmatrix} 1, & \text{if $n = 0$} \\
	  			0, & \text{if $n \neq 0$} \end{smallmatrix} \right.$.
\end{lemma}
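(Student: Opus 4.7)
The plan is to substitute the expansion \eqref{eq:m_tau_signal} for $m_\tau(k)$ into the identity $c_{\tau u} = \mathbb{E}_k[\mathbf{m}_\tau^T \mathbf{m}_u] - \mu_\tau\mu_u(2K+1)/(2K)$ from \eqref{eq:E[c_tau_u]}, expand the resulting quadratic cross-products, and then take first and second moments using (a) independence between the primary signal $s$ and the noise $w$, (b) the normalized-correlation definitions $\alpha_w^l,\alpha_p^l,\alpha_s^l$, and (c) Isserlis' theorem for the Gaussian noise $w$ when computing the variance. Throughout, Lemma \ref{lem:mu_tau} is used to cancel the large deterministic pieces of $\mu_\tau\mu_u$ against the corresponding deterministic pieces in $\mathbb{E}_k[\mathbf{m}_\tau^T \mathbf{m}_u]$, leaving only fluctuation cross-products.

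For the expectation \eqref{eq:Ectu_Sig}, I would group contributions according to which random components are paired. The noise-fluctuation $\times$ noise-fluctuation pairing produces $\alpha_w^l N_w^2$ directly from the definition of $\alpha_w^l$. The pilot $\times$ pilot pairing produces $\alpha_p^l \delta_p^2/(2K)$, the $1/(2K)$ arising because the pilot sits in a single bin averaged over $2K+1$. The AC-signal $\times$ AC-signal pairing gives the $\alpha_s^l$ contribution, and the signal-noise cross piece $(2/N)\mathrm{Re}\{S_\tau W_\tau^\ast\}$ combined with the DC-level products $N_s(\tau)N_w(u)$ produces the remaining $N_s(N_w + (N_s+N_w)\alpha_s^l)$ term. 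The $O(1/K^2)$ residue collects the mismatch between $(2K+1)/(2K)$ and $1$ together with smaller mixed terms such as pilot $\times$ signal-AC.

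For the variance \eqref{eq:Vctu_sig}, expand $c_{\tau u}^2 = (2K)^{-2}\sum_{k,k'} X_\tau(k)X_u(k)X_\tau(k')X_u(k')$ with $X_\tau(k) = m_\tau(k)-\mu_\tau$, and apply Isserlis' theorem to the resulting fourth-order moments of $\delta_w$. The dominant pairings split into two groups. The pilot-bin $\times$ noise-fluctuation cross terms yield the first bracket of \eqref{eq:Vctu_sig}, with the prefactor $(1+\delta_p/N_w)^2$ arising because the periodogram bin containing the pilot has its effective noise variance inflated from $N_w^2$ to $(N_w+\delta_p)^2$ after the squared-magnitude step. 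The pure noise-fluctuation quadratic contributes the second bracket, in which the $N_w^2\delta(l)$ piece comes from the diagonal Gaussian moment $\mathbb{E}[\delta_w(\tau,k)^4]$ that survives only when $\tau=u$; off-diagonal contributions vanish because $\delta_w(\tau,\cdot)$ and $\delta_w(u,\cdot)$ are independent for $\tau\neq u$.

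The main obstacle will be the fourth-moment bookkeeping for the variance: enumerating the Isserlis pairings, correctly tagging the $\tau=u$ diagonal contributions through the indicator $\delta(l)$, and verifying that every neglected cross-term is absorbed into $O(1/K^4)$ rather than $O(1/K^2)$. The most delicate step is pinning down the exact $(1+\delta_p/N_w)^2$ amplification factor, which requires carefully tracking how the deterministic pilot spike modulates the fluctuation statistics of the noise in the pilot bin once the periodogram has been formed.
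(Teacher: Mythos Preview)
Your plan is broadly sound and will reach the stated formulas, but it is organized differently from the paper, and one of the organizational choices makes your variance calculation harder than it needs to be. The paper does \emph{not} expand $c_{\tau u}$ by pairing random components (noise--noise, pilot--pilot, etc.). Instead it partitions the frequency index set into three regions $K_0$ (noise only), $K_1=\{0\}$ (pilot bin), $K_2$ (data part), and computes $\mathbb{E}[\mathbf{m}_\tau^T\mathbf{m}_u]$ and then $\mathbb{E}[c_{\tau u}]$ region by region. This is equivalent to your pairing decomposition for the mean, but it is cleaner because the signal model is literally piecewise across those regions.

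The more consequential difference is in the variance. You propose expanding $c_{\tau u}^2$ as a double sum over $(k,k')$ and invoking Isserlis' theorem on fourth moments of $\delta_w$. The paper avoids the double sum entirely: it writes $c_{\tau u}=(2K)^{-1}\sum_k V_\tau^m(k)V_u^m(k)$ with $V_\tau^m(k)=m_\tau(k)-\mu_\tau$, observes that the summands are uncorrelated across $k$, and so $\mathrm{var}(c_{\tau u})=(4K^2)^{-1}\sum_k\mathrm{var}(V_\tau^m(k)V_u^m(k))$. Each per-bin variance is then evaluated with a product-of-two-variables variance identity using the known periodogram moments, once for each of the three regions, giving $\mathrm{var}(c_{\tau u})=(4K^2)^{-1}(KC_0^k+C_1^k+KC_2^k)$. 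This region-by-region, single-sum route is what delivers the $(1+\delta_p/N_w)^2$ factor (from the pilot bin $K_1$) and the $K$ prefactor (from the $K$ bins in $K_0$ and $K_2$) without any Isserlis bookkeeping. Note also that $\delta_w(\tau,k)$ is a centered periodogram ordinate, hence essentially a shifted exponential rather than Gaussian, so applying Isserlis directly to $\delta_w$ is not strictly justified; the paper sidesteps this by working with the periodogram variance $\mathrm{var}(m_\tau(k))=(N_w+\delta_p\mathbf{1}_{K_1}+N_s\mathbf{1}_{K_2})^2$ from the outset.
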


\begin{proof}
  See Appendix \ref{apdx:proof_Ectu}.
\end{proof}

Since $K \gg 1$, the last terms of (\ref{eq:Ectu_Sig}) and (\ref{eq:Vctu_sig}) can be omitted for simplification. As in (\ref{eq:Nw+Ns}), the expectation of $c_{\tau u}$ can be further simplified in the low SNR regime as
\begin{equation}
    \mathbb{E}[c_{ \tau u }] \approx \alpha_w^l N_w^2 + \frac{\alpha_p^l}{2K} \delta_p^2. \label{eq:Ectu_sig}
\end{equation}
Note that the no signal case $\mathcal{H}_0$ can be found by letting $N_s = \delta_p = 0$.

The Lemma \ref{lem:Ectu} clearly shows that the covariance matrix depends on the correlation of the received signal and its relative power to the noise power, which is the SNR.
%
%
%
The statistics of the test are given in the following Theorem \ref{thm:ET1T2}. 

\begin{theorem}
  \label{thm:ET1T2}
  The statstics of the test crieterion in the low SNR regime are
  \begin{eqnarray}
    \mathbb{E}[T_1(N_d)] &\approx& N_w^2 + \frac{\delta_p^2}{2K}
      + \frac{2A_{N_d}^w}{N_d}N_w^2 + \frac{A_{N_d}^p}{KN_d}\delta_p^2 , \label{eq:ET1_sig} \\
      \mathbb{E}[T_2(N_d)] &\approx& N_w^2 + \frac{\delta_p^2}{2K}, \label{eq:ET2_Sig} \\
    \mathrm{var}(T_2(N_d)) &\approx& \frac{1}{K^2N_d} \left\{\delta_p^2 (N_w + \delta_p)^2 + \frac{N_w^2}{4}\left(N_w^2K + \left(N_w + \delta_p\right)^2 \right) \right\},  \label{eq:VarT2_Sig}
  \end{eqnarray}
  where the accumulated correlation of the noise and signal are given as
  \begin{eqnarray}
    A_{N_d}^w &=& \sum_{l = 0}^{N_d-2} (N_d - 1 - l) \alpha_w^l, \\
    A_{N_d}^p &=& \sum_{l = 0}^{N_d-2} (N_d - 1 - l) \alpha_p^l.
    \label{eq:A_Nd}
  \end{eqnarray}
\end{theorem}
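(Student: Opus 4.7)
The strategy is to express both $T_1$ and $T_2$ as linear combinations of the sample covariance entries $c_{\tau u}$ and then apply Lemma \ref{lem:Ectu}, exploiting the fact that the low-SNR expressions in (\ref{eq:Ectu_sig}) and (\ref{eq:Vctu_sig}) depend on $\tau$ and $u$ only through the lag $l = \tau - u$.

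For the expectations, I would begin with $\mathbb{E}[T_2]$, which is the simplest. Since $T_2 = \frac{1}{N_d}\sum_\tau c_{\tau\tau}$, linearity yields $\mathbb{E}[T_2] = \mathbb{E}[c_{\tau\tau}]$; evaluating (\ref{eq:Ectu_sig}) at lag $l=0$ with the normalization $\alpha_w^0 = \alpha_p^0 = 1$ produces (\ref{eq:ET2_Sig}) immediately. For $\mathbb{E}[T_1]$, write
\[
\mathbb{E}[T_1] = \frac{1}{N_d}\sum_{\tau=0}^{N_d-1}\sum_{u=0}^{N_d-1}\mathbb{E}[c_{\tau u}]
\]
and group the ordered pairs $(\tau,u)$ by their lag. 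The diagonal contributes $N_d\,\mathbb{E}[c_{\tau\tau}]$, while for each positive lag $l \in \{1,\ldots,N_d-1\}$ there are exactly $2(N_d-l)$ off-diagonal pairs (using the symmetry $c_{\tau u} = c_{u\tau}$). Substituting (\ref{eq:Ectu_sig}) and dividing by $N_d$ produces the constant term $N_w^2 + \delta_p^2/(2K)$ together with $\frac{2}{N_d}\sum_{l=1}^{N_d-1}(N_d - l)\bigl[\alpha_w^l N_w^2 + \alpha_p^l \delta_p^2/(2K)\bigr]$, which matches (\ref{eq:ET1_sig}) once the weighted lag-sums are identified (up to a harmless re-indexing) with the accumulated correlations $A_{N_d}^w$ and $A_{N_d}^p$ in (\ref{eq:A_Nd}).

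For the variance, expand
\[
\mathrm{var}(T_2) = \frac{1}{N_d^2}\sum_{\tau=0}^{N_d-1}\sum_{u=0}^{N_d-1}\mathrm{cov}(c_{\tau\tau}, c_{uu}).
\]
Here I would invoke the structural observation that the STFT windows used in step 3) of the algorithm are non-overlapping, so the signal and noise samples underlying distinct $c_{\tau\tau}$ and $c_{uu}$ are independent and $\mathrm{cov}(c_{\tau\tau}, c_{uu}) \approx 0$ for $\tau \neq u$. Only the $N_d$ diagonal terms survive, giving $\mathrm{var}(T_2) \approx \frac{1}{N_d}\mathrm{var}(c_{\tau\tau})$. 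Plugging $l=0$ into (\ref{eq:Vctu_sig}) (so $\delta(l)=1$ and $\alpha_p^0=1$), then applying the identity $N_w^2(1+\delta_p/N_w)^2 = (N_w+\delta_p)^2$ and the low-SNR simplification $2 + N_w^2 \approx N_w^2$, collapses the expression into (\ref{eq:VarT2_Sig}).

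I expect the variance step to be the main obstacle. The expectation calculations are essentially combinatorial bookkeeping once Lemma \ref{lem:Ectu} is in hand, but the variance requires arguing that the quadratic functionals $c_{\tau\tau}$ are effectively uncorrelated across sensing windows; without appealing to the non-overlapping-window structure, one would be forced into an explicit fourth-moment computation involving products of four periodogram entries. The remaining algebraic manipulations -- absorbing $(1+\delta_p/N_w)^2$ factors into $(N_w+\delta_p)^2$ and dropping additive constants that are negligible when $N_w \gg 1$ -- are routine but worth writing out cleanly in the appendix.
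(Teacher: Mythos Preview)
Your proposal is correct and follows essentially the same approach as the paper: the paper's proof simply records $\mathbb{E}[T_2] = \mathbb{E}[c_{\tau\tau}]$, $\mathbb{E}[T_1] = \mathbb{E}[T_2] + \frac{2}{N_d}\sum_{\tau<u}\mathbb{E}[c_{\tau u}]$, and $\mathrm{var}(T_2) = \frac{1}{N_d}\mathrm{var}(c_{\tau\tau})$, then defers to Lemma~\ref{lem:Ectu} and (\ref{eq:Ectu_sig}). Your write-up is actually more explicit than the paper's---in particular, you supply the non-overlapping-window justification for dropping the cross terms in $\mathrm{var}(T_2)$ and spell out the algebraic clean-up, both of which the paper leaves implicit.
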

Again the no signal case is obtained by letting all the signal terms to be zero and the approximations become exact.

\begin{proof}
Equations (\ref{eq:ET1_sig}) through (\ref{eq:VarT2_Sig}) can be easily derived from the results of Lemma \ref{lem:Ectu} and (\ref{eq:Ectu_sig}) as
  \begin{eqnarray}
    \mathbb{E}[T_2] &=& \mathbb{E}[c_{\tau \tau}],\\
    \mathbb{E}[T_1] &=& \mathbb{E}[T_2] + \frac{2}{N_d} \sum_{\tau = 0}^{N_d-2} \sum_{u = \tau+1}^{N_d-1} \mathbb{E}[c_{\tau u}] \\
    \mathrm{var}[T_2]&=& \frac{1}{N_d} \mathrm{var}(c_{\tau \tau}).
    \label{eq:pf_thmET1T2}
  \end{eqnarray}
\end{proof}

The approximations are made here based on the simplified statistics of the covariance matrix for the low SNR regime, and will be verified through simulation in Section \ref{sec:SimResults}.
The results of the Theorem \ref{thm:ET1T2} shows that $T_2$ is composed of autocovariances of the signal and noise term and $T_1$ is composed of accumulated covariances. Therefore, we can expect the relative amount of these values will determine the detection performance,
which will be derived in Section \ref{subsec:ProDet}.

\subsection{Probability of Detection}
\label{subsec:ProDet}

In this section, the performance of the SCS algorithm is analyzed in two steps. The decision threshold is found first and then the probability of detection is derived as in \cite{ZenLia09}. By the central limit theorem, both $T_1$ and $T_2$ approach Gaussian distribution when the number of samples ($N_d \times N$) becomes large.
The decision threshold and the corresponding probability of detection are given in the following Theorem \ref{thm:gamma} and \ref{thm:P_D}.

\begin{theorem}
  \label{thm:gamma}
  The decision threshold $\gamma$ is determined as
\begin{equation}
  \gamma = \frac{2\sqrt{KN_d} \left( 1 + \frac{2A_{N_d}^w}{N_d}\right)}{Q^{-1}(1 - P_{FA,req}) + 2\sqrt{KN_d}}, \label{eq:gamma}
\end{equation}
where $P_{FA,req}$ is the required probability of false alarm.
\end{theorem}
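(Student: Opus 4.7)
My plan is to set $P_{FA} = P_{FA,req}$ and solve for $\gamma$, exploiting the Gaussian approximation for $T_2$ that the paper invokes via the central limit theorem just before this theorem. First I would specialize Theorem \ref{thm:ET1T2} to $\mathcal{H}_0$ by setting $N_s = \delta_p = 0$, which gives $\mathbb{E}[T_1\mid\mathcal{H}_0] = N_w^2(1 + 2A_{N_d}^w/N_d)$, $\mathbb{E}[T_2\mid\mathcal{H}_0] = N_w^2$, and $\mathrm{var}(T_2\mid\mathcal{H}_0) = N_w^4/(4KN_d)$, so $\sigma_{T_2} = N_w^2/(2\sqrt{KN_d})$.

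Next I would rewrite the false-alarm event by inverting the ratio. Since $\gamma > 0$ and $T_2 > 0$ almost surely, $\{T > \gamma\} = \{T_2 < T_1/\gamma\}$, so
\begin{equation*}
P_{FA} \;=\; \Pr\bigl(T_2 < T_1/\gamma \bigm| \mathcal{H}_0\bigr).
\end{equation*}
The key simplification is to replace $T_1$ in this inequality by its mean $\mu_{T_1}$. This is reasonable under $\mathcal{H}_0$ because $T_1 - T_2 = (2/N_d)\sum_{\tau<u} c_{\tau u}$ is a sum of off-diagonal covariances across essentially independent noise-only sensing windows; by Lemma \ref{lem:Ectu} with $\delta_p = N_s = 0$ each such off-diagonal $c_{\tau u}$ (for $l\neq 0$) has both mean and variance an order of $1/K$ smaller than the diagonal terms, so $T_1$ concentrates on $\mu_{T_1}$ while $T_2$ supplies the dominant fluctuation driving the ratio's tail.

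Under this simplification together with the Gaussian approximation for $T_2$,
\begin{equation*}
P_{FA} \;\approx\; \Pr\bigl(T_2 < \mu_{T_1}/\gamma\bigr) \;=\; \Phi\!\left(\frac{\mu_{T_1}/\gamma - \mu_{T_2}}{\sigma_{T_2}}\right).
\end{equation*}
Setting this equal to $P_{FA,req}$ and using $\Phi^{-1}(y) = Q^{-1}(1-y)$, I would arrive at the scalar equation
\begin{equation*}
\frac{\mu_{T_1}}{\gamma} \;=\; \mu_{T_2} + \sigma_{T_2}\, Q^{-1}(1 - P_{FA,req}),
\end{equation*}
so that $\gamma = \mu_{T_1}/\bigl(\mu_{T_2} + \sigma_{T_2} Q^{-1}(1 - P_{FA,req})\bigr)$. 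Substituting the three expressions from the first step and factoring $N_w^2$ out of the denominator produces (\ref{eq:gamma}).

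The hardest part of a clean write-up is justifying the replacement $T_1 \to \mu_{T_1}$: a rigorous version would linearize $T_1/T_2$ around $(\mu_{T_1},\mu_{T_2})$ and show that the extra terms contributed by $\mathrm{var}(T_1)$ and $\mathrm{cov}(T_1,T_2)$ are $O(1/K)$ smaller than the $\mathrm{var}(T_2)$ contribution already kept, using the off-diagonal variance bound from Lemma \ref{lem:Ectu}. The remaining algebra that turns the above compact form into the symmetric layout of (\ref{eq:gamma}) is routine, but one must be careful with the correspondence $\Phi^{-1}(y) = Q^{-1}(1-y)$ so the $Q^{-1}(1-P_{FA,req})$ in the denominator appears with the correct sign.
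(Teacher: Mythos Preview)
Your proposal is correct and follows essentially the same route as the paper: specialize Theorem~\ref{thm:ET1T2} to $\mathcal{H}_0$, rewrite $\{T_1/T_2>\gamma\}$ as $\{T_2<T_1/\gamma\}$, replace $T_1$ by $\mathbb{E}[T_1]$, apply the Gaussian approximation to the standardized $T_2$, and invert to obtain~(\ref{eq:gamma}). Your discussion of why the $T_1\to\mu_{T_1}$ replacement is legitimate is in fact more careful than the paper's own proof, which simply writes the approximation without further comment.
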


\begin{proof}
The probability of false alarm $P_{FA}$ can be calculated as 
\begin{eqnarray}
  P_{FA} &=& \mathbb{P} \left(\frac{T_1}{T_2} > \gamma \Bigm| \mathcal{H}_0 \right) \nonumber \\
    &\approx& \mathbb{P} \left(\frac{T_2 - \mathbb{E}[T_2]}{\sqrt{\mathrm{var}(T_2)}} < \frac{\mathbb{E}[T_1] / \gamma - \mathbb{E}[T_2]}{\sqrt{\mathrm{var}(T_2)}}\Bigm| \mathcal{H}_0 \right)\nonumber \\
    &=& \mathbb{P} \left(\frac{T_2 - N_w^2}{\sqrt{\frac{N_w^4}{4KN_d}}} < \frac{\frac{N_w^2}{\gamma}\left(1 + \frac{2A_{N_d}^w}{N_d}\right) - N_w^2}{\sqrt{\frac{N_w^4}{4KN_d}}} \Bigm| \mathcal{H}_0 \right)\\
    &=& 1 - Q\left( \frac{2\sqrt{KN_d}}{\gamma}\left(1 + \frac{2A_{N_d}^w}{N_d}\right) - 2\sqrt{KN_d} \right) \label{eq:P_FA},
\end{eqnarray}
where \[Q(x) = \frac{1}{2\pi} \int_x^\infty e^{-x^2/2} dx.\]

Therefore, $\gamma$ is easily derived from (\ref{eq:gamma_def}) and (\ref{eq:P_FA}).
\end{proof}

If we assume the noise correlation is independent of the lag, i.e. $\alpha_w^l = \alpha_w, \forall l$, then $A_{N_d}^w$ becomes proportional to $N_d^2$ for large values of $K \cdot N_d$ product and the threshold $\gamma$ has a tendency to become proportional to $N_d$. Note that the decision threshold is independent of the noise power and SNR, which explains the robustness of SCS to the noise power uncertainty.

\begin{theorem}
  \label{thm:P_D}
The probability of detection is obtained as
\begin{eqnarray}
  P_D &=& \mathbb{P} \left(\frac{T_1}{T_2} > \gamma \Bigm| \mathcal{H}_1 \right) \label{eq:P_Danal} \\
    &\approx& 1 - Q\left( \frac{ \frac{2\sqrt{N_d}}{\gamma} \left(1 + \frac{2A_{N_d}^w}{N_d} - \gamma \right) + \frac{N_d+2A_{N_d}^p - \gamma N_d}{\gamma K \sqrt{N_d}} \Gamma_p^2 }{1 + \frac{2\Gamma_p^2}{K}} \right),
\end{eqnarray}
where $\Gamma_p$ is the effective SNR, which is
\begin{equation}
  \Gamma_p \triangleq \frac{\delta_p}{N_w} = \frac{f_pBN}{F_s}\frac{P_S}{P_N} = \frac{f_p BN}{F_s} \mathrm{SNR}. \label{eq:Gamma_p} \\
\end{equation}

\end{theorem}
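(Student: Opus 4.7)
My plan is to mirror the argument used to obtain $\gamma$ in Theorem \ref{thm:gamma}, but under $\mathcal{H}_1$. Starting from $P_D=\mathbb{P}(T_1/T_2>\gamma\mid \mathcal{H}_1)$, I rearrange the event as $T_2<T_1/\gamma$, which is valid because $T_2$ is a sample mean of diagonal covariance entries and is almost surely positive. I then invoke the central limit theorem: with $N\cdot N_d$ large, each $c_{\tau u}$ is an average of $2K$ roughly weakly dependent terms, so $T_1$ and $T_2$ are approximately Gaussian. As in the $P_{FA}$ derivation, I treat $T_1$ as sharply concentrated around $\mathbb{E}[T_1]$ (its variance is $O(1/K^2 N_d)$ once the diagonal fluctuations are averaged across both rows and columns) and regard only $T_2$ as the random quantity on the right.

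With that reduction, standardizing $T_2$ gives
\[
P_D\;\approx\;\mathbb{P}\!\left(\frac{T_2-\mathbb{E}[T_2]}{\sqrt{\mathrm{var}(T_2)}}<\frac{\mathbb{E}[T_1]/\gamma-\mathbb{E}[T_2]}{\sqrt{\mathrm{var}(T_2)}}\,\bigg|\,\mathcal{H}_1\right)\;=\;1-Q\!\left(\frac{\mathbb{E}[T_1]/\gamma-\mathbb{E}[T_2]}{\sqrt{\mathrm{var}(T_2)}}\right),
\]
so the whole theorem reduces to plugging in the low-SNR expressions (\ref{eq:ET1_sig})--(\ref{eq:VarT2_Sig}) from Theorem \ref{thm:ET1T2} and simplifying. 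For the numerator I expand
\[
\tfrac{1}{\gamma}\mathbb{E}[T_1]-\mathbb{E}[T_2]
=\tfrac{N_w^2}{\gamma}\!\left(1+\tfrac{2A_{N_d}^w}{N_d}-\gamma\right)+\tfrac{\delta_p^2}{2\gamma K N_d}\!\left(N_d+2A_{N_d}^p-\gamma N_d\right),
\]
and then substitute $\delta_p=\Gamma_p N_w$ everywhere so that $N_w$ cancels against the denominator.

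For the denominator I rewrite $\mathrm{var}(T_2)$ as $\tfrac{N_w^4}{4KN_d}\bigl(1+\tfrac{4\Gamma_p^2}{K}+\text{lower order}\bigr)$ by factoring the leading $N_w^4/(4KN_d)$ out of (\ref{eq:VarT2_Sig}); taking the square root and using $\sqrt{1+4\Gamma_p^2/K}\approx 1+2\Gamma_p^2/K$ produces the factor $\bigl(1+\tfrac{2\Gamma_p^2}{K}\bigr)$ that appears in the stated denominator. Dividing the two pieces and simplifying (each $N_w^2/\sqrt{\mathrm{var}(T_2)}$ turns into $2\sqrt{KN_d}$, and $\delta_p^2/\sqrt{\mathrm{var}(T_2)}$ turns into $2\sqrt{KN_d}\,\Gamma_p^2$) delivers the announced argument of $Q(\cdot)$.

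The main obstacle is bookkeeping: isolating which cross-terms to keep and which to drop at low SNR, in particular justifying the approximation $\mathrm{var}(T_2)\approx \tfrac{N_w^4}{4KN_d}\bigl(1+\tfrac{4\Gamma_p^2}{K}\bigr)$ rather than carrying all three summands in (\ref{eq:VarT2_Sig}), and defending the replacement of $T_1$ by its mean, which requires a (not explicitly computed) bound on $\mathrm{var}(T_1)$ showing it is negligible relative to $\gamma^2\,\mathrm{var}(T_2)$. Once those low-SNR simplifications are accepted, the rest is a direct substitution from Theorem \ref{thm:ET1T2} together with the identity $\Gamma_p=\delta_p/N_w$, and no further probabilistic argument beyond the Gaussian approximation of $T_2$ is needed.
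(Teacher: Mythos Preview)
Your overall strategy is exactly the paper's: replace $T_1$ by $\mathbb{E}[T_1]$, standardize $T_2$ via the CLT, and substitute the low-SNR moments from Theorem~\ref{thm:ET1T2}. The numerator computation you sketch matches the paper line for line.

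The one place your plan diverges is the simplification of $\mathrm{var}(T_2)$. You factor $N_w^4/(4KN_d)$ out of (\ref{eq:VarT2_Sig}) and Taylor-expand the square root, which yields $\sqrt{\mathrm{var}(T_2)}\approx \tfrac{N_w^2}{2\sqrt{KN_d}}\bigl(1+\tfrac{2\Gamma_p^2}{K}\bigr)$ and hence a leading coefficient $2\sqrt{KN_d}$ in the $Q$-argument. The paper instead bounds (\ref{eq:VarT2_Sig}) by a perfect square,
\[
\mathrm{var}(T_2)\;\lesssim\;\frac{1}{N_d}\Bigl(\tfrac{\delta_p}{K}(N_w+\delta_p)+\tfrac{N_w}{2K}(N_wK+\tfrac{\delta_p}{K})\Bigr)^2\;\approx\;\frac{1}{4N_d}\Bigl(N_w^2+\tfrac{2\delta_p^2}{K}\Bigr)^2,
\]
so that $\sqrt{\mathrm{var}(T_2)}\approx \tfrac{N_w^2}{2\sqrt{N_d}}\bigl(1+\tfrac{2\Gamma_p^2}{K}\bigr)$, producing the $2\sqrt{N_d}$ that actually appears in the stated formula. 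Your route therefore does \emph{not} ``deliver the announced argument of $Q(\cdot)$'' as written---it is off by a factor of $\sqrt{K}$ in the first summand. (One could argue your reduction is more faithful to (\ref{eq:VarT2_Sig}), and indeed setting $\Gamma_p=0$ in the stated $P_D$ does not recover (\ref{eq:P_FA}); but to reproduce the theorem as stated you need the paper's square-completion bound, not the Taylor step.)
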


\begin{proof}
The probability of detection can be derived in a similar way as Theorem \ref{thm:gamma}.
The variance of $T_2$ can be further approximated for the low SNR case as
\begin{eqnarray}
  \mathrm{var}(T_2(N_d)) 
    &\leqq& \frac{1}{N_d}\left(\frac{\delta_p}{K}\left(N_w+\delta_p\right) + \frac{N_w}{2K}\left(N_wK + \frac{\delta_p}{K} \right) \right)^2 \\
    &=& \frac{1}{N_d}\left( \frac{N_w^2}{2} + \frac{\delta_p}{K} N_w + \frac{\delta_p^2}{K} \right)^2 \\
    &\approx& \frac{1}{4N_d} \left(N_w^2 + \frac{2\delta_p^2}{K} \right)^2.  \label{eq:VarT2_Sig_Apprx}
\end{eqnarray}

Then, the probability of detection is derived the same way as the proof of Theorem \ref{thm:gamma} as
\begin{eqnarray}
  P_D &=& \mathbb{P} \left(\frac{T_1}{T_2} > \gamma \Bigm| \mathcal{H}_1 \right) \nonumber \\
  &=& \mathbb{P} \left(\frac{T_2 - \mathbb{E}[T_2]}{\sqrt{\mathrm{var}(T_2)}} < \frac{\mathbb{E}[T_1] / \gamma - \mathbb{E}[T_2]}{\sqrt{\mathrm{var}(T_2)}} \Bigm| \mathcal{H}_1 \right)\nonumber \\
  &=& \mathbb{P} \left( \frac{T_2 - \mathbb{E}[T_2]}{\sqrt{(\mathrm{var}(T_2))}} <
    \frac{ \frac{N_w^2}{\gamma}\left(1+\frac{2A_{N_d}^w}{N_d}\right) + \frac{N_d+2A_{N_d}^p}{2\gamma KN_d}\delta_p^2 - N_w^2 - \frac{\delta_p^2}{2K}}{\frac{1}{2\sqrt{N_d}} \left(N_w^2 + \frac{2\delta_p^2}{K} \right)} \Bigm| \mathcal{H}_1 \right) \\
    &\approx& 1 - Q\left( \frac{ \frac{2\sqrt{N_d}}{\gamma} \left(1 + \frac{2A_{N_d}^w}{N_d} - \gamma \right) + \frac{N_d+2A_{N_d}^p - \gamma N_d}{\gamma K \sqrt{N_d}} \Gamma_p^2 }{1 + \frac{2\Gamma_p^2}{K}} \right).
	\label{eq:P_DAnal}
\end{eqnarray}
\end{proof}

Given SNR and sampling rate $F_s$, the effective SNR is determined by 
the number of FFT $N$ in a dwell, since the signal bandwidth $B$ and pilot power fraction $f_p$ are fixed for the ATSC signal.
Theorem \ref{thm:P_D} indicates
that the detection probability $P_D$ can be increased by increasing the number of samples either by $N$ or $N_d$. Note that $K$ is also increased when $N$ is increased. Also note that the decision threshold $\gamma$ is increased along with $N_d$.
Hence, we expect that the length of sensing window or dwell $t_s$ affects the performance more than the number of dwells $N_d$.

\section{Simulation Results and Comparisons} 
\label{sec:SimResults}


In this section, we present simulation results of the proposed SCS algorithm that show its practicality and superior detection performance versus existing solutions in terms of sensitivity and sensing time. 
The simulation has been conducted in compliance with the IEEE 802.22 standards using actual DTV signals captured in the US, for fair comparison and more importantly, to prove SCS's applicability to real spectrum sensing problems.
Our simulation codes are publicly available in \cite{SCS-CODE-URL}.
The analytical results driven in Section \ref{sec:SCSAnalysis} are also verified in this section.
\subsection{Spectrum Sensing Requirements}
\label{subsec:requirements}

In any CR based system, emphasis is on the sensing of potential primary services in the given frequency band.
Each CR system has its own requirements on the spectrum sensing depending on various aspects of the primary services, propagation characteristics and
application specific features.
Spectrum sensing requirements of the IEEE 802.22 system are reviewed in this section and 
used in the simulation.

The sensitivity is defined as the minimum received power at the secondary system that can be detected
with the given probability of detection ($P_D$) and  probability of false alarm ($P_{FA}$) pair. Note
that the sensitivity is not equal to the $\mathrm{SNR_{wall}}$ in \cite{TanSah08}. The primary services in
the IEEE 802.22 systems are digital TV, analog TV, and wireless microphone, 
operated in TV Bands
(54 -- 862 MHz) \cite{FCC08-260, 802.22_Req}. 
Note that the format and bandwidth of the TV signal depends on the
regional regulation. For example, digital TV signals can be Advanced Television Systems Committee (ATSC: North America),
Digital Video Broadcasting - Terrestrial (DVB-T: Europe)
or Integrated Services Digital Broadcasting (ISDB: Japan) \cite{SteCho09, CorGho07}. In this paper, only ATSC A/74 DTV signal
\cite{ATSC:A74} is considered for detection.
The required sensitivity is 90\% of $P_D$ and 10\% of $P_{FA}$ at -116 dBm, which is equivalent to -21 dB of SNR. 
Figure \ref{fig:ATSCspectrum} shows a sample ATSC spectrum that will be detected using the proposed algorithm.
As explained in Section \ref{sec:SCSAnalysis}, it has a high powered pilot on the lower edge and is correlated near the pilot.

\subsection{Simulation Procedure}
\label{subsec:SimProcedure}

The simulations were performed according to the procedures in \cite{802.22_Sim_model} and
\cite{802.22:InitialSigProc} with the proposed SCS algorithm. Two previous sensing algorithms are
also tested for comparison and the performance analysis given in Section \ref{sec:SCSAnalysis} is also verified.
The actual DTV signals captured by the ATSC in
Washington DC and New York City are used for simulation, which includes all relevant impairments of
wireless channels such as fading, scattering, shadowing and local oscillator mismatch. The
12 selected signals recommended by the 802.22 committee \cite{802.22_Sim_model} are used as in several other studies \cite{CorGho07, ZenLia09, CheGao07}.

Due to the random
characteristics of the wireless channel, detector performance varies significantly.
Figure \ref{fig:SCS_Nd30_ts1_rho0} shows the probability of missed detection ($P_{MD} = 1 - P_D$) for five selected signals that illustrates the broad range of sensing performance along with the average of the 12 signals.
Note that the sensitivity difference
can be as large as 12 dB with $P_{FA} = 0.1$. The 21.52 MHz sampled ATSC captured signal is
downsampled to $F_s$ = 2.152 MHz. The FFT point $N$ is determined by
\begin{equation}
    N = 2^n \; s.t. \; n = \lfloor \log_2 (F_s/\Delta f) \rfloor = \lfloor \log_2 (F_s \cdot t_s)
    \rfloor, \label{eq:N}
\end{equation}
where $\Delta f$ is the frequency resolution of the FFT and $t_s$ is one sensing time.
For Fig.  \ref{fig:SCS_Nd30_ts1_rho0}, $t_s = 1$ ms and $N = 2048$ are used. The number of 
dwells $N_d$ is 30 and LPF bandwidth $B_f$ is set to 20 kHz to provide a buffer for
frequency deviations. The sensing requirements of 802.22 are achieved with these parameters.
Downsampling frequency $F_s$ and bandwidth $B_f$ are fixed throughout the rest of the paper.

\subsection{SCS Sensing Results and Comparisons}
\label{subsec:SimResults}

Before we evaluate the probability of detection, the analytical value of decision threshold is verified in Table \ref{tbl:DecThr} with simulation.
The decision threshold $\gamma$ is roughly proportional to $N_d$ as expected. It is slightly reduced as $t_s$ increases, but the effect is negligible compared to that of $N_d$. Also note that the analytical value fits better as the number of samples increases as expected.

The detection performance of the proposed SCS detector is compared with the FFT based pilot location detector \cite{CorGho07}, covariance absolute value (CAV) detection \cite{ZenLia09}, and CAV with low pass filtering and downsampling (step 2 of the SCS algorithm) in Fig. \ref{fig:cleanATSC} on detecting ideal ATSC signal with additive white Gaussian noise (AWGN). Sensing times for SCS and the pilot location detector are set to 30 ms (1 ms $\times$ 30 and 5 ms $\times$ 6 each) and CAV used  $N_s = 5\times10^5 $ samples (23.2ms) with a smoothing factor $L = 14$.
The decision thresholds for SCS and CAV are set to make $P_{FA} = 0.1$. The $P_{FA}$ of the pilot location detector was 0.02.
The sensitivities of SCS and pilot location detector are about 7 dB better than that of the CAV, while CAV with step 2
outperformed SCS about 1.3 dB. 
Note that the step 2 not only reduces the complexity but also significantly improves the sensitivity of the CAV detector due to the increased effective SNR. 
This step is also used in \cite{CorGho07}.

Though Fig. \ref{fig:cleanATSC} shows that SCS, pilot detector and improved CAV meet the sensitivity requirement of the IEEE 802.22 standard, it does not guarantee their performance in the real environment.
Since detector performance is highly dependent on the received signal quality that is degraded by the unpredictable wireless channel, it is very important to test the detector under practical circumstances.
Hence, all the simulation results from now on are conducted using the actual signal sets explained in Section \ref{subsec:SimProcedure}.

Figure \ref{fig:SCS_Nd_ts1_rho0} shows the superior sensitivity of SCS based on the number of dwells $N_d$.
All the parameters for each detector are set as used in Fig. \ref{fig:cleanATSC}.
The sensitivities are averaged over all 12 signals for each detection method.%
 \footnote{We corrected the average calculation in \cite{CorGho07}: it was done with 9 good signals.}
The SCS algorithm shows similar performance with pilot location sensing when $N_d$ = 6, which is 1/5th of the sensing time.
Fast sensing has several advantages. It can reduce interference to the primary service by enabling fast channel changes \cite{YucArs09}.
Furthermore, the secondary system can use the channel more efficiently since it cannot transmit during the sensing time \cite{CorCha06}.
When the same sensing time (30 ms) is used, it shows about 3 dB better sensitivity.
Figure \ref{fig:SCS_Nd_ts1_rho0} also shows that SCS is very effective detector in real environment. Though improved CAV (with low pass filtering and downsampling) has better sensitivity for ideal ATSC signal than SCS, the SCS outperforms it when detecting actual signals.

Given the sampling rate, the frequency resolution or equivalently number of FFT size $N$ is determined
by the duration of the sensing window $t_s$ as in (\ref{eq:N}). For example, 0.1 ms sensing window allows
a 128-point FFT, while 1 ms allows a 2048-point FFT. In Fig. \ref{fig:SCS_ts_Nd30_rho0}, we show
that fine frequency resolution (a longer sensing window) detects better when the same dwells are
used, which also coincides with the analytical results.
The sensitivity of the SCS algorithm is improved by roughly 2 dB per doubled window size.

Combining the above results, the relationship between the total sensing time $T_s = t_s \times N_d$ and
detection performance of SCS is investigated in Fig. \ref{fig:SCS_tsxNd_rho0}.
The simulation results show that $t_s$ affects performance more than $N_d$,
as expected from Theorem \ref{thm:P_D}.
Note that 12 ms sensing time with $t_s = 2$ ms has sensitivity of -19.9 dB, while 15 ms with $t_s = 0.5$ ms has -19.3 dB.
It clearly shows that there is a tradeoff between performance and complexity.
Increased sensing time allows an increased FFT size.
One way to reduce complexity is to decrease the down sampling frequency
$F_s$ and to use coarse frequency resolution. However, since a 2048-point FFT is included in every 802.22 device,
it can be reused in spectrum sensing. Thus, using a 1 ms sensing window will be a reasonable choice.
Sensitivity of -22.9 dB is
achieved with 2 ms $\times$ 30 = 60 ms sensing time.


The noise power has thus far been assumed to be known at the receiver. However, perfect
estimation of noise power is impossible in practice and it is well known that even
a fraction of dB noise uncertainty can degrade detector performance severely \cite{TanSah08}.
Figure \ref{fig:SCS_rho0vs2} shows the robustness of SCS to the noise uncertainty problem.
The noise uncertainty factor $\rho$ means that the noise
power has a uniform distribution of $[-\rho \,\mathrm{dB}, \rho \,\mathrm{dB}]$.  SCS shows almost
no performance degradation with 2dB of noise uncertainty regardless of sensing time.
Note that the nominal noise uncertainty level for the IEEE 802.22 system is 1 dB \cite{802.22_Req}.

The robustness of SCS mainly comes from the fact that it exploits the statistical independence of
the signal and noise components, especially that the noise is uncorrelated. Thus, uncertainty in the noise
only reduces the possible correlation and does not strongly affect signal detection.

\section{Conclusion}
\label{sec:conclusion}

In this paper, we have proposed a spectrum sensing algorithm using 
the covariance of the partial spectrum of the received signal.
The proposed SCS algorithm can be used to
detect arbitrary signals with a proper selection of parameters, depending on the features of the primary signal.
The decision threshold in particular should be carefully chosen since there is a fundamental tradeoff between the probability of false alarm and the probability of detection.
We derived an appropriate decision threshold and 
the detection time and sensitivity of SCS mathematically.
The theoretical results were verified through extensive simulations
conducted using actual digital TV signals captured in the US, and our code is openly posted online \cite{SCS-CODE-URL}.
%
The detection performance of SCS is compared with the FFT based pilot location detector and
the CAV detector. The SCS detector achieves the same detection performance as the FFT based pilot location detector in 1/5th of the sensing time
and 3 dB better sensitivity in the same sensing time of 30 ms. It has also been shown that SCS is highly robust
against noise uncertainty. 
The results of this paper suggest that SCS is an extremely effective sensing algorithm
and should be seriously considered by the 802.22 standards body, and future industry efforts at cognitive radio.

\appendices
\section{Proof of Lemma \ref{lem:mu_tau}}
\label{apdx:proof_mu_tau}

The spectrum can be categorized into three sections: noise only (lower frequency, $k_0$), pilot tone ($k_1$), and signal data part (high frequency, $k_2$).
Without loss of generality, we can set
$K_1$ at DC, i.e. $K_0 = \{-K, -K+1, \cdots, -1\}, \; K_1 = \{0\}, \mathrm{and } \; K_2 = \{1, \cdots, K\}$.

Then,
\begin{eqnarray}
  \mathbb{E} \left[\delta_w(\tau, k)\right] &=& \mathbb{E} \left[\delta_s(\tau, k)\right] = 0, \label{eq:Edeltaw} \\
  \mathbb{E} \left[m_\tau(k)\right] &=& 
	N_w + \delta_p \mathbf{1}_{K_1}(k) + N_s \mathbf{1}_{K_2}(k) \\
    \mathrm{var}\left(m_\tau(k)\right) &=& 
	\left( N_w + \delta_p \mathbf{1}_{K_1}(k) + N_s \mathbf{1}_{K_2}(k) \right)^2 
\end{eqnarray}
where $\delta_p = \mathbb{E} [\delta_p(\tau)]$ is the pilot power spectrum, $N_s = {\mathbb{E} [N_s(\tau)]}$ is the nominal PSD of the primary signal and $\mathbf{1}_A(k)$ is an indicator function.  The in-band average spectrum is given as
\begin{eqnarray}
    \mu_\tau &=& \frac{1}{|k|} \left\{ \sum_{k \in K_0} \left(N_w + \delta_w(\tau, k)\right) +
                \sum_{k \in K_1} \left(N_w + \delta_w(\tau,k) +  \delta_p(\tau)\right) + \right. \nonumber \\
                & & \;\;\;\;\;\;\; \left. \sum_{k \in K_2} \left(N_w + \delta_w(\tau, k) + N_s(\tau) + \delta_s(\tau,k)\right) \right\} \nonumber \\
		&=& N_w + \frac{\delta_p(\tau)}{2K} + \frac{N_s(\tau)}{2}, \label{eq:mu_tau_pf}
\end{eqnarray}
where
$ \delta_p = \frac{f_p P_S}{\Delta f} = \frac{f_p}{F_s/N}P_S$, $N_s = \frac{(1-f_p)P_S}{B}$
and $\Delta f = 1/t_s = F_s/N$ is the frequency resolution.

\section{Proof of Lemma \ref{lem:Ectu}}
\label{apdx:proof_Ectu}


  The expected value of $c_{\tau u}$ can be easily shown using (\ref{eq:m_tau_signal}), (\ref{eq:Nw+Ns}) and (\ref{eq:Edeltaw}) -- (\ref{eq:mu_tau_pf}).
  When there is signal, the first term of (\ref{eq:E[c_tau_u]}) can be analyzed in three frequency regions as 
  \begin{enumerate}
    \item $k \in K_0$: ~
      \begin{eqnarray}
	\mathbb{E}[\mathbf{m_\tau}^T \mathbf{m_u}] &=& \mathbb{E} \left[ \left( N_w(\tau) + \delta_w(\tau, k) \right)
			\left( N_w(u)+ \delta_w(u, k) \right) \right] \nonumber \\
	&=& N_w^2 + N_w^2 \alpha_w^{(\tau - u)} = N_w^2 \left(1 + \alpha_w^l \right),
	\label{eq:Emt_k0}
      \end{eqnarray}

    \item $k \in K_1$:
      \begin{eqnarray}
	\mathbb{E}[\mathbf{m_\tau}^T \mathbf{m_u}] &=& \mathbb{E} \left[ \left( N_w(\tau) + \delta_w(\tau, k) + \delta_p(\tau)\right)
			\left( N_w(u)+ \delta_w(u, k) + \delta_p(u) \right) \right] \nonumber \\
	&=& N_w^2 \left( 1 + \alpha_w^l \right) + 2N_w \delta_p + \alpha_p^l \delta_p^2,
	\label{eq:Emt_k1}
      \end{eqnarray}

    \item $k \in K_2$:
      \begin{eqnarray}
	\mathbb{E}[\mathbf{m_\tau}^T \mathbf{m_u}] &=& \mathbb{E} \big[ \left( N_w(\tau) + \delta_w(\tau, k) + N_s(\tau) + \delta_s(\tau,k) + \frac{2}{N}\mathrm{Re}\{S_\tau(k)W_\tau^\ast(k)\}\right) \cdot \nonumber \\
			&~& \quad \quad \left( N_w(u)+ \delta_w(u, k) + N_s(u) + \delta_s(u,k) + \frac{2}{N}\mathrm{Re}\{S_u(k)W_u^\ast(k)\} \right) \big] \nonumber \\
			&=& \mathbb{E}[N_w(\tau)N_w(u)] + \mathbb{E}[\delta_w(\tau,k) \delta_w(u,k)] + 2N_w N_s + \mathbb{E}[N_s(\tau) N_s(u)] \nonumber \\
			&~& \;\;\;\; + \mathbb{E}[\delta_s(\tau) \delta_s(u)] \label{eq:Emt_k2_1} \\
	&=& N_w^2 \left( 1 + \alpha_w^l \right) + 2N_w N_s + 2N_s(N_s + N_w)\alpha_s^l,
	\label{eq:Emt_k2}
      \end{eqnarray}
      \begin{equation*}
	\begin{array}{ll}
	  \because & %
	\mathbb{E}[\mathrm{Re}\{S_\tau(k)W_\tau^\ast (k)\}] %
	= \mathbb{E}[\mathrm{Re}\{S_\tau(k)\} \mathrm{Re}\{W_\tau(k)\}] + \mathbb{E}[\mathrm{Im}\{S_\tau(k)\} \mathrm{Im}\{W_\tau(k)\}] = 0, \\
	&  \mathbb{E}[\delta_s^2(\tau, k)] = \mathrm{var}\left( m_\tau(k) \right) - \mathbb{E}[\delta_w^2(\tau, k)] %
	= N_s^2 + 2N_w N_s.
      \end{array}
      \end{equation*}
  \end{enumerate}

  The expected value of the second term of (\ref{eq:E[c_tau_u]}) can be evaluated straight forwardly:
  \begin{eqnarray}
    \mathbb{E}[\mu_\tau \mu_u] &=& \mathbb{E}\left[ \left( N_w + \frac{\delta_p (\tau) }{2K} \right) \left( N_w + \frac{\delta_p(u)}{2K} \right) \right] \nonumber \\
	  &=& N_w^2 + \frac{N_w \delta_p}{K} + \frac{\alpha_p^l}{4K^2} \delta_p^2 \label{eq:Emumu}.
  \end{eqnarray}

  Now the expectation of the sample covariance is given as
  \begin{eqnarray}
    \mathbb{E}[c_{\tau u}] &=& \frac{2K+1}{2K} N_w^2 \left( 1 + \alpha_w^l \right) + \frac{\delta_p}{2K+1} (2N_w + \alpha_p^l \delta_p)
      + N_s \left( N_w + \left(N_s+N_w\right) \alpha_s^l \right) \nonumber \\
      & & - \frac{2K+1}{2K} \left( N_w^2 + \frac{N_w\delta_p}{K} + \frac{\alpha_p^l}{4K^2}\delta_p^2 \right) \nonumber \\
      &=& \alpha_w^l N_w^2 + \frac{\alpha_p^l}{2K}\delta_p^2 + N_s \left(N_w + (N_s + N_w)\alpha_s^l \right) + O(\frac{1}{K^2}). \label{eq:Ectu_sigPf} 
  \end{eqnarray}


  Next, we will derive the variance of $c_{\tau u}$ for the signal case. 
  Let
  \begin{equation}
    V_{\tau}^m (k) = m_\tau(k) - \mu_\tau.
    \label{eq:V_tu_k}
  \end{equation}
  Then the variance of $c_{\tau u}$ can be obtained as
  \begin{equation}
    \mathrm{var}(c_{\tau u}) = \mathrm{var}\left(\frac{1}{2K}\sum_k V_{\tau}^m(k) V_{u}^m(k) \right) 
    = \frac{1}{4K^2} \sum_k \mathrm{var}\left(V_{\tau}^m(k) V_{u}^m(k) \right),
    \label{eq:Vctu_def}
  \end{equation}
  where $V_{\tau}^m(k)$ and $V_{u}^m(k)$ are i.i.d and $V_{\tau}^m(k) V_{u}^m(k)$ and $V_{\tau}^m(j) V_{u}^m(j)$ are uncorrelated if $k \neq j$.

  Since the variance of the product of two random variables $X$ and $Y$ is given by
  \begin{equation}
    \mathrm{var}(XY) = (\mathbb{E}[X])^2 \sigma_X^2 + (\mathbb{E}[Y])^2 \sigma_Y^2 + 2\mathbb{E}[X]\mathbb{E}[Y]\mathrm{cov}(X,Y) + \sigma_X^2 + \sigma_Y^2 + \left(\mathrm{cov}(X,Y)\right)^2,
    \label{eq:varXY}
  \end{equation}
  where $\sigma_X^2$ and $\sigma_Y^2$ are the variances of $X$ and $Y$ respectively. We need to find the statistics of $V_{\tau}^m(k)$ and $V_{u}^m(k)$ first. It can be shown that
  \begin{equation}
    \mathbb{E}[V_{\tau}^m(k)] = \mathbb{E}[V_{u}^m(k)] = \begin{cases}
	    -\frac{\delta_p}{2K}, & k \in K_0,\\
	\delta_p, & k \in K_1, \\
	N_s - \frac{\delta_p}{2K}, & k \in K_2,
      \end{cases} 
  \end{equation}
  \begin{equation}
      \mathrm{var}(V_\tau^m(k)) = \mathrm{var}(V_u^m(k)) = \mathrm{var}(m_\tau(k)) 
      = (N_w + \delta_p \mathbf{1}_{K_1}(k) + N_s \mathbf{1}_{K_2}(k))^2,
  \end{equation}
  \begin{equation}
      \mathrm{cov}\left(V_\tau^m(k), V_u^m(k) \right) = \mathrm{var}(V_\tau^m(k)) \delta(\tau - u).
    \label{eq:EVtmk}
  \end{equation}

  Now we can find the variances in (\ref{eq:Vctu_def}).
  Define $C_j^k \triangleq \mathrm{var}(\left(V_{\tau}^m(k) V_{u}^m(k) \right), k \in k_j$, then
      \begin{eqnarray}
	C_0^k &=& \mathbb{E}\left[-\frac{\delta_p(\tau)}{2K}\right]^2 N_w^2 +  \mathbb{E}\left[-\frac{\delta_p(u)}{2K})\right]^2 N_w^2 \nonumber \\
	& & + 2\mathbb{E}\left[-\frac{\delta_p(\tau)}{2K}\right] \mathbb{E}\left[-\frac{\delta_p(u)}{2K}\right] N_w^2 \delta(\tau - u) + N_w^2 + N_w^2 + N_w^4\delta(\tau-u) \nonumber \\
	  &=& N_w^2 \left( \frac{\delta_p^2}{2K^2} \left( 1 + \alpha_p^l \delta(\tau-u) \right) + 2 + N_w^2 \delta(\tau - u) \right) + O(\frac{1}{K^4}),%
	\label{eq:C_0} \\
	C_1^k &=&  2\left( N_w + \delta_p^2\right)^2 %
	    \left( \delta_p^2 \left(1 + \alpha_p^l \delta(\tau-u) \right) + (N_w + \delta_p)^2 \delta(\tau-u) \right) + O(\frac{1}{K^4}),%
	\label{eq:C_1} \\
	      C_2^k &=&  2N_w^2 \left( \frac{\delta_p^2}{4K^2} \left( 1 + \alpha_p^l \delta(\tau - u) \right) + %
	       (N_w + N_s)^2 \delta(\tau-u) \right) + O(\frac{1}{K^4}).%
	\label{eq:C_2}
      \end{eqnarray}


  Thus, the variance of $c_{\tau u}$ can be obtained from (\ref{eq:Vctu_def}) as
  \begin{equation}
    \mathrm{var}(c_{\tau u}) = \frac{1}{4K^2} \left( KC_0^k + C_1^k + KC_2^k \right),
  \end{equation}
  when there is a detected signal. The no signal case $\mathcal{H}_0$ 
  is obtained by letting $N_s = 0$.



\bibliographystyle{IEEEtran}
\bibliography{JKim}

\newpage

\begin{table}
  \centering
  \caption{Notations used in Analysis}
  \label{tbl:Notations}
  \begin{tabular}{|c|c|c|}
    \hline
	Variable & Description & Unit \\ \hline
	$s_\tau(n)$, $w_\tau(n)$ & signal and noise samples on the $\tau$-th dwell, $s_\tau(n) \triangleq s(n+\tau N)$ & V \\
	$m_\tau(k)$ & $k$-th component of the spectrogram vector on the $\tau$-th dwell $\mathbf{m}_\tau$ & W/Hz \\
	$N_s(\tau)$, $N_w(\tau)$ & average signal and noise spectrum computed in the signal band at the $\tau$-th dwell & W/Hz \\
	$\delta_s(\tau,k)$, $\delta_w(\tau,k)$ & residual signal and noise components at the $k$-th bin and the $\tau$-th dwell & W/Hz \\
	$\delta_p(\tau)$  & power of the pilot tone at the $\tau$-th dwell & W/Hz \\
	$S_\tau(k)$, $W_\tau(k)$ & fourier transforms of $s_\tau(n)$ and $w_\tau(n)$ & \\
	$\mu_\tau$ & in-band average spectrum at the $\tau$-th dwell & W/Hz \\
	$N_o = N_w$ & nominal one-sided PSD of the noise signal & W/Hz \\
	$N_s$ 	   & nominal signal PSD & W/Hz \\
	$\delta_p$ & nominal pilot power & W/Hz \\
	$c_{\tau u}$ & sample covariance of $\mathbf{m}_\tau$ and $\mathbf{m}_u$ & \\
	$\alpha_p^l$, $\alpha_s^l$ & nomalized correlations for pilot and signal each & \\
	$\alpha_w^l$ & nomalized correlation for noise term & \\
	$A_{N_d}^p$, $A_{N_d}^w$ & accumulated correlations of pilot and noise & \\ \hline
  \end{tabular}
\end{table}


\begin{table}
  \centering
  \caption{Decision Thresholds for ATSC signal detection with $P_{FA} = 0.1$}
  \label{tbl:DecThr}
  \begin{tabular}{|c|c|c|c|c|}
    \hline
    \multicolumn{5}{|c|}{Analytical Value (\ref{eq:gamma})} \\ \hline
    \multirow{2}{0.5cm}{$N_d$}& \multicolumn{4}{c|}{$t_s$ (ms)} \\ \cline{2-5}
    	& 0.1    & 0.5   & 1.0   & 2.0 \\ \hline
	6	&  3.39  & 2.83  & 2.73  & 2.70 \\ \hline
	12	&  5.92  & 4.89  & 4.70  & 4.64 \\ \hline
	30	&  13.16 & 10.96 & 10.60 & 10.36 \\ \hline \hline
    \multicolumn{5}{|c|}{Simulated Value}  \\ \hline
    \multirow{2}{0.5cm}{$N_d$}& \multicolumn{4}{c|}{$t_s$ (ms)} \\ \cline{2-5}
    	& 0.1    & 0.5   & 1.0   & 2.0 \\ \hline
     	6	& 3.59     & 3.07    & 2.92   & 2.75 \\ \hline
     	12	& 6.51     & 5.27    & 4.29   & 4.79 \\ \hline
    	30	& 14.17    & 11.27   & 10.95  & 10.62 \\ \hline
  \end{tabular}
\end{table}


\begin{figure}[p]
  \centering
  \psfig{figure=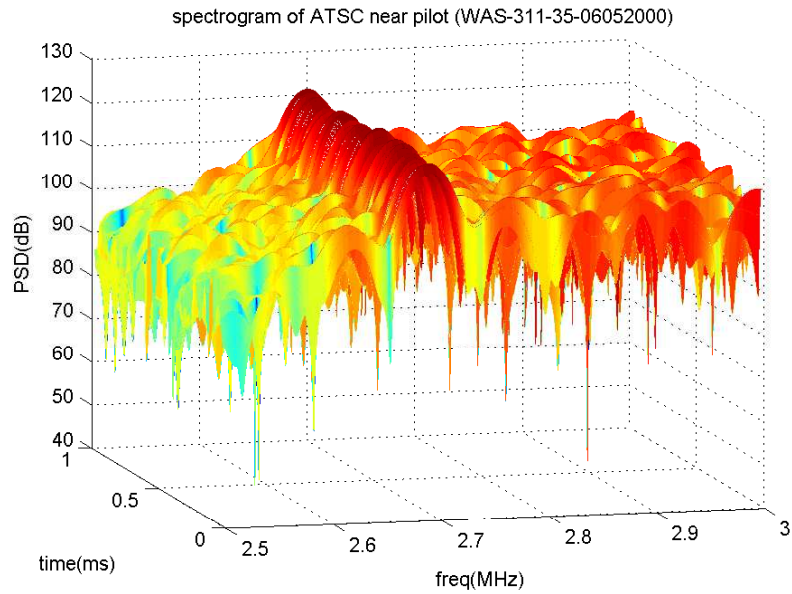,width=10.5cm}
  \caption{Spectrogram of a captured ATSC signal around pilot tone at 2.69 MHz.}
  \label{fig:ATSCspectrum}
\end{figure}

\begin{figure}[p]
  \centering
  \psfig{figure=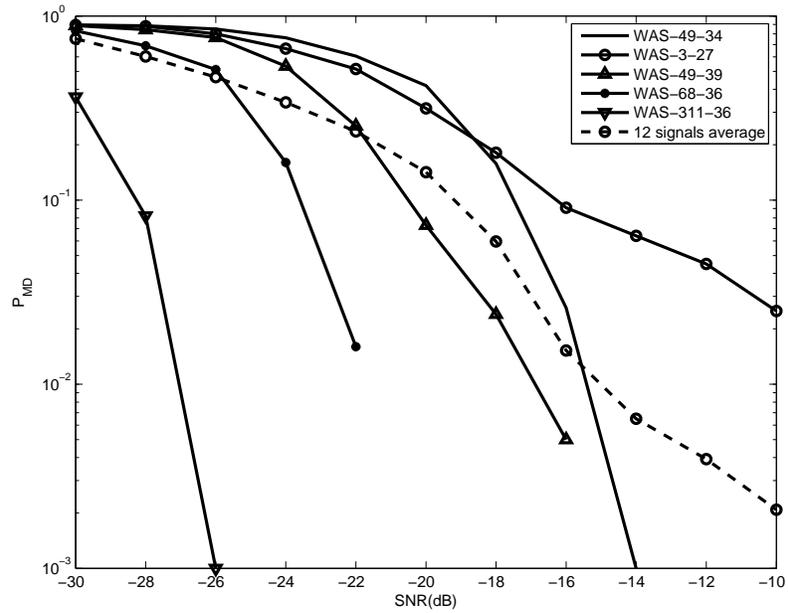, width=10.5cm}
  \caption{Probability of missed detection of Spectral Covariance Sensing (SCS) for the 5 ATSC signals selected from the 12 signals set, $t_s$ = 1ms, $N_d$ = 30.}
  \label{fig:SCS_Nd30_ts1_rho0}
\end{figure}

\begin{figure}[p]
  \centering
  \psfig{figure=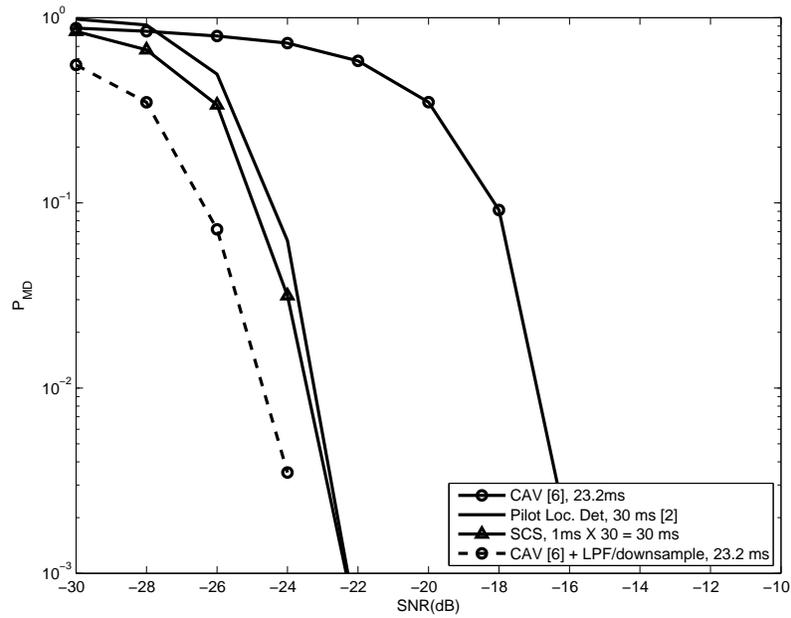, width=10.5cm}
  \caption{Probability of missed detection of Spectral Covariance Sensing (SCS), FFT pilot location detector \cite{CorGho07}, Covariance Absolute Value (CAV) \cite{ZenLia09}, and CAV with low pass filter and downsampling (step 2 of SCS algorithm) on the ideal ATSC signal.}
  \label{fig:cleanATSC}
\end{figure}

\begin{figure}[p]
  \centering
  \psfig{figure=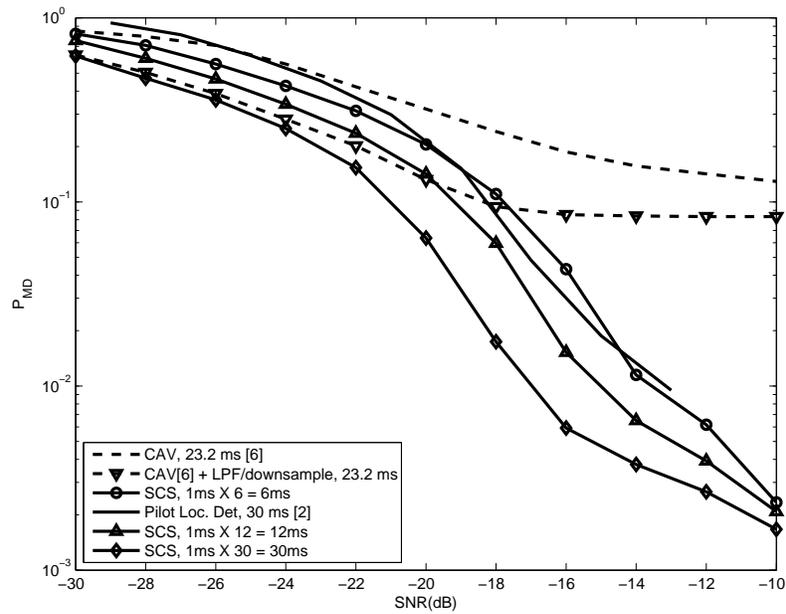, width=10.5cm}
  \caption{Effects of $N_d$ in Spectral Covariance Sensing ($t_s$ = 1ms) compared with the FFT pilot location detector \cite{CorGho07} and covariance absolute value (CAV) detector \cite{ZenLia09}.} 
  \label{fig:SCS_Nd_ts1_rho0}
\end{figure}

\begin{figure}[p]
  \centering
  \psfig{figure=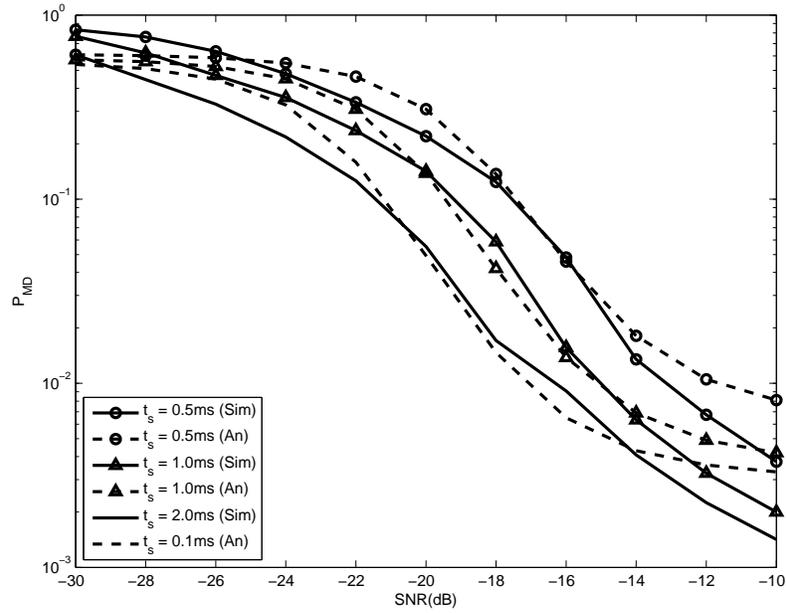, width=10.5cm}
  \caption{Effects of $t_s$ in Spectral Covariance Sensing, $N_d$ = 12.} 
  \label{fig:SCS_ts_Nd30_rho0}
\end{figure}

\begin{figure}[p]
  \centering
  \psfig{figure = 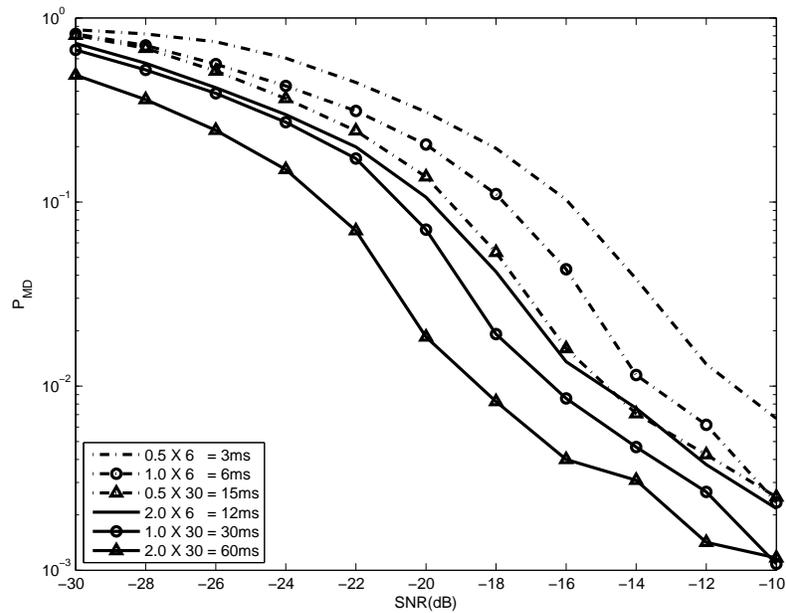, width = 10.5cm}
  \caption{Probability of missed detection of Spectral Covariance Sensing in various sensing times ($t_s \times N_d$).} 
  \label{fig:SCS_tsxNd_rho0}
\end{figure}

\begin{figure}[p]
  \centering
  \psfig{figure = 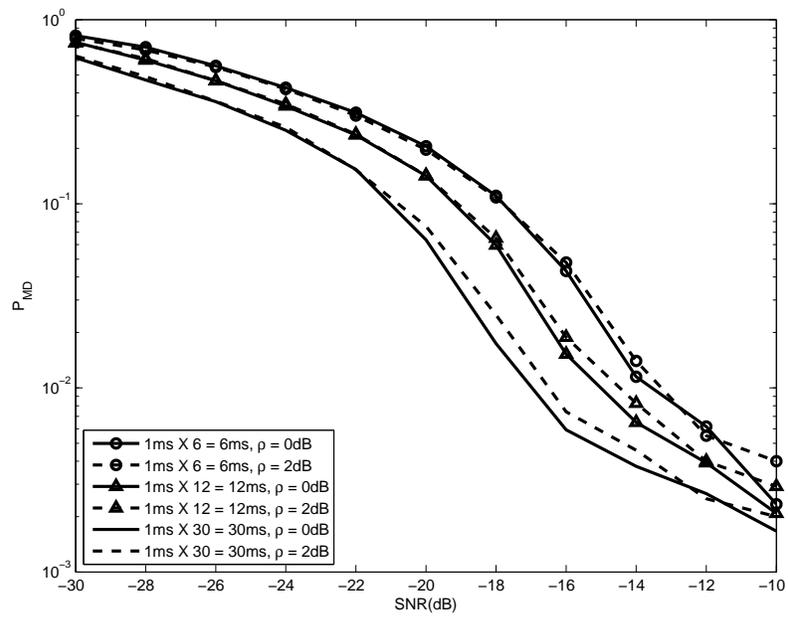, width = 10.5cm}
  \caption{Comparison of the probability of missed detection of Spectral Covariance Sensing with 2 dB noise uncertainty and 0 dB (without uncertainty).} 
  \label{fig:SCS_rho0vs2}
\end{figure}

\end{document}